\newtheorem{theorem}{Theorem}
\title{On the complexity of Winner Verification and Candidate Winner for Multiwinner Voting Rules}
\author{
Chinmay Sonar$^1$ \and
Palash Dey$^2$
\and
Neeldhara Misra$^3$
\affiliations
$^1$University of California Santa Barbara
$^2$Indian Institute of Technology Kharagpur\\
$^3$Indian Institute of Technology Gandhinagar\\
\emails
chinmaysonar96@gmail.com
palash.ju.dey@gmail.com,
neeldhara.m@iitgn.ac.in,
}
\newcommand{\pos}{{{{\mathrm{pos}}}}}
\tikzset{node style ge/.style={circle}}
\tikzset{BarreStyle/.style =   {opacity=0.4,line width=5mm,line cap=round,color=#1}}
\tikzset{chosen/.style =   {opacity=0.5,circle,fill=#1!50}}
\newtheorem{lemma}{\bf Lemma}
\newtheorem{definition}{\bf Definition}
\newenvironment{proof}{{\bf Proof:}}{\hfill\rule{2mm}{2mm}}
\newcommand{\el}{\ensuremath{\ell}\xspace}
\newcommand{\suc}{\ensuremath{\succ}\xspace}
\DeclarePairedDelimiter{\ceil}{\lceil}{\rceil}
\DeclarePairedDelimiter{\floor}{\lfloor}{\rfloor}
\renewcommand{\leq}{\leqslant}
\renewcommand{\geq}{\geqslant}
\renewcommand{\le}{\leqslant}
\newcommand{\YES}{\textsc{Yes}\xspace}
\newcommand{\NO}{\textsc{No}\xspace}
\newcommand{\NP}{NP\xspace}
\newcommand{\NPH}{NP-hard\xspace}
\newcommand{\NPC}{NP-complete\xspace}
\renewcommand{\AA}{\ensuremath{\mathcal A}\xspace}
\newcommand{\CC}{\ensuremath{\mathcal C}\xspace}
\newcommand{\DD}{\ensuremath{\mathcal D}\xspace}
\newcommand{\FF}{\ensuremath{\mathcal F}\xspace}
\newcommand{\LL}{\ensuremath{\mathcal L}\xspace}
\newcommand{\NN}{\ensuremath{\mathcal N}\xspace}
\newcommand{\PP}{\ensuremath{\mathcal P}\xspace}
\newcommand{\ZZ}{\ensuremath{\mathcal Z}\xspace}
\newcommand{\Polytime}{{\textnormal{\textup{P}}}}
\newcommand{\coNP}{coNP\xspace}
\newcommand{\coNPH}{coNP-hard\xspace}
\newcommand{\CCWV}{\textsc{CCWV}}
\newcommand{\MCWlong}{\textsc{Monroe Candidate Winner}}
\newcommand{\MCWi}{\ensuremath{\langle C,V,c,k \rangle}}
\begin{document}

\maketitle

\begin{abstract}
    The Chamberlin-Courant and Monroe rules are fundamental and well-studied rules in the literature of multi-winner elections. The problem of determining if there exists a committee of size $k$ that has a Chamberlin-Courant (respectively, Monroe) score of at most $r$ is known to be NP-complete. We consider the following natural problems in this setting: a) given a committee $S$ of size $k$ as input, is it an optimal $k$-sized committee, and b) given a candidate $c$ and a committee size $k$, does there exist an optimal $k$-sized committee that contains $c$? 

    In this work, we resolve the complexity of both problems for the Chamberlin-Courant and Monroe voting rules in the settings of rankings as well as approval ballots. We show that verifying if a given committee is optimal is coNP-complete whilst the latter problem is complete for $\Theta_{2}^{P}$. We also demonstrate efficient algorithms for the second problem when the input consists of single-peaked rankings. Our contribution fills an essential gap in the literature for these important multi-winner rules.
\end{abstract}

\section{Introduction}

We study preference aggregation in the multiwinner setting. Here, we have a set of voters who express preferences over a collection of candidates, and we are faced with the task of shortlisting a small number of candidates in a manner that is as satisfactory as possible for all the agents involved. This abstraction captures several application scenarios such as choosing a governing body of any institution, deciding which advertisements to show on TV during some program, recommending movies~\cite{LB11}, selecting group of products for promotion \cite{SFL16}, shortlisting candidates for a limited fellowship, etc. 

A fundamental property that one often wants a multiwinner rule to satisfy is {\em proportional representation}. Intuitively speaking, proportional representation means that the fraction of seats that a party receives in the winning committee should be proportional to the fraction of votes it receives. Indeed, two of the most popular multiwinner rules, the Chamberlin-Courant rule \cite{CC83} and the Monroe rule \cite{M95}, are designed to achieve proportional representation as best as one can hope for. In the Chamberlin-Courant (abbreviated CC) rule, we seek winners that can be hypothetically ``assigned to'' voters in such a way that every voter is reasonably satisfied with their alternative. In the Monroe rule, we have the additional requirement that each winning candidate takes on the responsibility of representing roughly the same number of voters. Depending on how one formalizes misrepresentation and how the preferences of the voters are modeled (popularly one of approval ballots or rankings), natural variants of these rules are used in practice.

Bartholdi et al.~\shortcite{bartholdi1989voting} showed that determining winners for many, otherwise excellent, voting rules are \NPH. Prominent examples of such single winner ($k$=1) rules include Kemeny's voting rule~\cite{kemeny1959mathematics}, Lewis Caroll's rule (Dodgson Rule, 1876). Moreover, some of these single winner rules seem to be substantially harder than any \NPC problem --- they are complete for the complexity class $P_{\parallel}^{NP}$~\cite{HSV05}. Papadimitriou and Zachos~\shortcite{PZ82} were the first to introduce the class $P_{\parallel}^{NP}$. Any language in this class can be decided in polynomial time using a polynomial number of \emph{parallel} access to an \NP oracle. Notice that, \emph{parallel} access forbids adaptive queries and only allows `batch' queries to an \NP oracle. 

We recall briefly that the preferences of voters in an election instance are typically solicited as either \emph{rankings} (total orders over candidates) or \emph{approval ballots} (subsets of ``approved'' candidates). The problem of finding a committee whose misrepresentation is bounded by a given threshold is known to be NP-complete for Chamberlin-Courant and Monroe \cite{LB11,PRZ08} in the setting of rankings as well as approval ballots. Moreover, it is both \NPH and \coNPH even to decide whether some given candidate belongs to an optimal CC committee~\cite[Corollary 3]{BFKNST17} in the setting of rankings. Our main contribution in this work is to completely settle the complexity of two natural versions of the winner determination question in the context of the two fundamental multiwinner rules, Chamberlin-Courant and Monroe. We address these problems in the settings of both rankings and approval ballots.


\textbf{Our Contribution.} 
We consider the following problems for both the Chamberlin-Courant and Monroe rules, in the setting of approval ballots and rankings. In the \textsc{Winner Verification} problem, we want to know if a proposed committee is optimal, and in the \textsc{Candidate Winner} problem, we are given a candidate $c$ and a committee size $k$, and the question is if there exists an optimal $k$-sized committee containing $c$. 

\emph{Winner Verification.} Our first set of contributions is for the \textsc{Winner Verification} problem; we show that it is complete for the complexity class \coNP{}. In this case, the membership is easy to establish. For a given committee, observe that it is easy to compute it's score with respect to the Chamberlin-Courant rule (and also the Monroe rule, although this is less straightforward). Thus, our coNP certificate is simply a ``rival" committee with a better score. We remark, as an aside, that this is in contrast with rules such as Dodgson for which computing Dodgson score of a given candidate is intractable.  To show hardness for coNP, we reduce from the complement of \textsc{Hitting Set} problem in different ways depending on the setting. For showing the hardness of \emph{Monroe Rule} we employ a variant where the elements enjoy uniform occurrences among the sets. Apart from settling the complexity of fundamental question of winner verification, our contribution identifies a natural coNP-complete problem, in particular, one that is \emph{not} merely the complement of a natural NP-complete problem. 

\emph{Candidate Winner.} This problem was recently shown to be both NP-hard and coNP-hard~\cite{BFKNST17} for the Chamberlin-Courant voting rule in the setting of rankings. We demonstrate here that the problem is complete for $P_{\parallel}^{NP}$ (or, equivalently, $\Theta_{2}^{P}$) for both the Chamberlin-Courant and Monroe rules, in the setting of rankings as well as approval ballots. All of these results involve reductions from the \textsc{Vertex Cover Member} problem. Although these reductions are executed in a similar spirit, the different settings do require non-trivial techniques in the constructions. 

Our main contributions are summarized below. We refer the reader to the next section for the relevant terminology.


\begin{theorem}
    \label{thm:wv-cc}
    \textsc{Winner Verification} for Chamberlin-Courant and Monroe is coNP-complete in the setting of approval ballots and rankings. In the latter setting, the result holds for the $\ell_1$ and $\ell_\infty$ Borda misrepresentation functions.
\end{theorem}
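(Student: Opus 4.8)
The plan is to prove membership in \coNP and \coNP-hardness separately, the hardness going through reductions from the complement of \textsc{Hitting Set} that are tuned to each of the four settings.

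\textbf{Membership.} I would show that the complementary problem --- given $\langle C,V,S\rangle$, is there a committee $S'$ with $|S'|=|S|$ whose misrepresentation is strictly less than that of $S$? --- lies in \NP, taking $S'$ itself as the certificate. The only non-obvious point is that the misrepresentation of a \emph{fixed} committee is computable in polynomial time in each setting. For Chamberlin--Courant this is clear: assign each voter to a most preferred committee member (for rankings, with respect to $\ell_1$ or $\ell_\infty$ Borda misrepresentation; for approval, to any approved committee member) and report the sum, the maximum, or the number of unrepresented voters. For Monroe the assignment must additionally be balanced --- each committee member is responsible for between $\floor{n/k}$ and $\ceil{n/k}$ voters --- so the score of a fixed committee is the optimum of a minimum-cost transportation problem for $\ell_1$, and for $\ell_\infty$ a matching feasibility problem solved by binary search over the threshold; both are polynomial. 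Hence the complement is in \NP and \textsc{Winner Verification} is in \coNP.

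\textbf{Hardness.} For Chamberlin--Courant with approval ballots I would reduce from the complement of \textsc{Hitting Set}: given a universe $U$, a family $\mathcal{F}=\{F_1,\dots,F_m\}$ and an integer $t$, decide whether \emph{no} $t$ elements meet every $F_i$. The election has one candidate $c_j$ per element $u_j$ together with a handful of auxiliary candidates, and for each $F_i$ a block of $N$ identical voters (with $N$ large) approving exactly $\{c_j : u_j\in F_i\}$; a committee of size roughly $t+1$ is sought. The auxiliary candidates and a single extra ``escape'' voter are arranged so that the proposed committee $S$ has a score that is pinned down by this gadgetry alone, while any committee strictly better than $S$ is forced to devote $t$ of its slots to a set of element-candidates meeting every $F_i$; such a committee exists --- and then beats $S$ by exactly one --- precisely when $\mathcal{F}$ has a hitting set of size at most $t$. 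The large multiplicity $N$ makes the score of any committee far from that of $S$ unless it covers essentially all the sets, which rules out committees that are merely ``nearly'' hitting sets; designing this gadget so that $S$ is optimal exactly when no small hitting set exists, with no spurious near-optimal rival, is the bulk of the technical work. For the two ranking settings I would replace each approval ballot by a linear order ranking the approved candidates above all others, padded with enough dummy candidates that, under $\ell_1$ Borda misrepresentation, representing a voter by an approved versus a disapproved candidate differs by an amount dominating the total number of voters; the approval analysis then transfers, with the maximum misrepresentation taking the place of the sum in the $\ell_\infty$ case.

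\textbf{Monroe and the main obstacle.} The Monroe versions need one more ingredient: since the winning committee must admit a \emph{balanced} voter assignment, a committee of element-candidates hitting all the sets is useful only if each of those candidates can be made responsible for roughly $n/k$ voters. I would therefore reduce from the variant of \textsc{Hitting Set} in which every element occurs in exactly $d$ sets --- which remains \NPC --- and duplicate voters so that, in the intended committee, all element-candidates are approved by the same number of voters and a balanced assignment is always available; the escape gadget is replicated so that balancedness is never the reason a committee fails to improve on $S$. I expect this reconciliation of Monroe's balancedness with the hitting-set construction to be the main obstacle, which is exactly why the reduction must pass through a uniform-occurrence version of \textsc{Hitting Set}; a secondary but genuine difficulty, already present for Chamberlin--Courant, is making the score of $S$ independent of the set system while simultaneously ensuring that only a hitting-set committee can match or beat it.
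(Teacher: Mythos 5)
Your membership argument (a rival committee as the \coNP{} certificate, plus polynomial-time scoring of a fixed committee via a transportation/matching subroutine for Monroe) and your overall hardness plan (reduce from the complement of \textsc{Hitting Set}, passing through a uniform-occurrence variant for Monroe) coincide with the paper's. The genuine gap is in the shape of the hardness gadget you commit to: set-voters who approve (or rank on top) \emph{only} the element-candidates, a proposed committee $S$ built entirely from auxiliary candidates so that its score is independent of the set system, and a large multiplicity $N$ that is supposed to ``rule out committees that are merely nearly hitting sets.'' For the additive objectives (approval-CC and $\ell_1$ Borda) this is backwards. If $S$ represents none of the set-voters, its score is on the order of $mN$, while a rival consisting of element-candidates hitting $m-1$ of the $m$ sets already scores on the order of $N$: large multiplicity makes each covered block \emph{more} valuable, so it rewards partial coverage and widens the loophole rather than closing it. No escape voter or auxiliary gadget repairs this, because for ``only a hitting set beats $S$'' to hold, $S$ must already sit within a tiny additive gap of the hitting-set committee's score, and that is impossible while $S$ badly misrepresents every set-voter; moreover, any uniform per-slot cost you attach to swapping an auxiliary candidate for an element-candidate that is large enough to block a single element covering $m-1$ sets also blocks the full hitting-set swap. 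Since your two ranking constructions are obtained by lifting the approval one, they inherit the same defect.

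The fix---and the paper's actual construction---is to place a designated dummy of the proposed committee \emph{just below} the set in each vote. For $\ell_1$ Borda the paper creates, for each $i\in[k]$ and each $X\in\mathcal{F}$, a vote $v(i,X)$ ranking the three elements of $X$ first, then $d_i$ in position four, then a long run of \emph{distinct} filler candidates; the committee $\mathcal{D}=\{d_1,\dots,d_k\}$ then has the small, instance-independent score $3mk$, a hitting-set committee scores at most $2mk$, and any other committee leaves some vote with all of its top $3mk+4$ candidates outside the committee and hence scores above $3mk$. The role of the filler block is to penalize a \emph{single} unhit vote by more than the entire gap, not to amplify the number of unhit votes. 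The approval analogue has $v(i,X)$ approve $\{c_u:u\in X\}\cup\{d_i\}$ together with one escape voter approving all element-candidates, so that $\mathcal{D}$ misses exactly one voter and only a committee containing a hitting set achieves score zero; no multiplicity is needed. Your Monroe and $\ell_\infty$ sketches need the same repair (for $\ell_\infty$ the comparison is a maximum of $3$ for $\mathcal{D}$ versus at most $2$ for a hitting set versus at least $3mk+3$ otherwise).
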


\begin{theorem}
    \label{thm:wv-cc}
    \textsc{Candidate Winner} for Chamberlin-Courant and Monroe is complete for $\Theta_{2}^{P}$ in the setting of approval ballots and rankings. In the latter setting, the result holds for the $\ell_1$ and $\ell_\infty$ Borda misrepresentation functions.
\end{theorem}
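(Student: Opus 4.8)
\emph{Membership in $\Theta_{2}^{P}$.} The first step is to note that the score of any fixed committee $S$ can be computed in polynomial time in all the settings: for Chamberlin--Courant one assigns every voter to its favourite member of $S$, and for Monroe one computes a minimum-cost balanced assignment of voters to members of $S$ as a min-cost bipartite matching; this holds for approval ballots and for both the $\ell_1$ and $\ell_\infty$ Borda misrepresentation functions. Hence the set of tuples $(\langle C,V\rangle,k,t)$ for which some size-$k$ committee has score at most $t$ lies in \NP, and so does the variant that additionally requires $c\in S$. Since the optimum score $r^\star$ is a non-negative integer that is polynomially bounded in the input size, I would determine $r^\star$ by binary search using $O(\log(|V|\cdot|C|))$ adaptive queries to the first oracle, and then make one final query to the second oracle to decide whether some optimal committee contains $c$. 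As the class of languages decidable with $O(\log n)$ adaptive queries to an \NP oracle equals $P_{\parallel}^{NP}=\Theta_{2}^{P}$, this settles membership in every claimed setting.

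\emph{Hardness: the common construction.} For hardness I would give, in each setting, a polynomial-time many-one reduction from \textsc{Vertex Cover Member} --- given a graph $G=(V,E)$ and a vertex $v^\star$, decide whether $v^\star$ belongs to some minimum vertex cover of $G$ --- which is $\Theta_{2}^{P}$-complete. The shared skeleton is: the candidates are $V$ together with $|V|$ ``filler'' candidates $f_1,\dots,f_{|V|}$; the committee size is $k=|V|$; the queried candidate is $c=v^\star$; and the votes are designed so that (i) leaving any edge of $G$ uncovered by $S\cap V$ is extremely costly, while (ii) each vertex placed in the committee occupies a slot that could instead have satisfied a filler voter. When this is done correctly, an optimal committee is exactly a \emph{minimum} vertex cover of $G$ together with some $|V|-\tau(G)$ fillers, so some optimal committee contains $c=v^\star$ if and only if $v^\star$ lies in some minimum vertex cover. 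For Chamberlin--Courant with approval ballots this is immediate: include $|V|+1$ copies of a voter approving exactly $\{a,b\}$ for each edge $\{a,b\}$, and one voter approving exactly $\{f_i\}$ for each $i$; a size-$|V|$ committee with $j$ vertices then leaves exactly $j$ filler voters unsatisfied, giving score $(|V|+1)\cdot(\text{number of edges uncovered by }S\cap V)+j$, which equals $\tau(G)$ precisely when $S\cap V$ is a minimum vertex cover and exceeds $\tau(G)$ otherwise.

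\emph{Adaptations, and the main obstacle.} The real work is to transport this skeleton to the remaining five cases while keeping the correspondence \emph{exact}, since a many-one reduction needs the optimal committees to be precisely the minimum covers with their filler padding. For Borda rankings the per-voter misrepresentation is a ranking position rather than a $\{0,1\}$ value, so an edge voter would distinguish whether its edge is covered by one or by two committee members; to neutralise this I would insert, just below the two endpoints of each edge, a long private block of dummy candidates (making an uncovered edge catastrophic and a covered one almost free), force one ``anchor'' candidate into every optimal committee and route the filler and edge voters through it (so that a dissatisfied filler voter always incurs the same misrepresentation), and scale the multiplicities so that the priority ``cover every edge, then use as few vertex slots as possible, then everything else'' is enforced; a short exchange argument then shows the dummy candidates never enter an optimal committee. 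For $\ell_\infty$-Borda, additive penalties are useless, so I would instead \emph{grade} the filler gadget so that a dissatisfied voter $w_i$ has misrepresentation exactly $i$; then the $\ell_\infty$ contribution of the filler voters equals $\max\{i:f_i\notin S\}$, which is minimised exactly by a minimum cover together with the \emph{top} $|V|-\tau(G)$ fillers, and the delicate point is to tune the dummy blocks so that no committee can profitably leave an edge uncovered and ``patch'' it with a single well-placed candidate. For Monroe, the balanced-assignment constraint is the additional wrinkle: I would add dummy voters so that each chosen vertex and each chosen filler can absorb its prescribed quota of voters without perturbing the accounting, and reduce from \textsc{Vertex Cover Member} on regular graphs, where the uniform vertex degrees make the balance work out and which remains $\Theta_{2}^{P}$-complete. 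I expect the Monroe reductions and the anti-patching argument for $\ell_\infty$ to be the principal obstacles, because in both the score function does not decompose into the transparent ``count the wasted slots'' form that makes the approval Chamberlin--Courant case easy.
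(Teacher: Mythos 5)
Your proposal is correct in outline and follows the same two-pronged strategy as the paper: membership via polynomially many \NP{} queries that test achievable score thresholds (with and without the constraint $c\in S$), and hardness via a many-one reduction from \textsc{Vertex Cover Member} with vertex-candidates, high-multiplicity edge voters, and padding candidates whose voters penalize every committee slot spent on a vertex. The differences are worth noting. For membership you binary-search the optimal score with $O(\log)$ adaptive queries and invoke $P^{NP}[O(\log n)]=P_{\parallel}^{NP}$, whereas the paper queries all $O(mn)$ candidate thresholds in one parallel batch; these are interchangeable. For hardness, your constructions keep the set of optimal committees symmetric (exactly the minimum covers plus fillers), so the answer reduces to whether \emph{some} minimum cover contains $v^\star$; the paper instead adds extra vote blocks that bias the score in favor of $c_w$, so that whenever a minimum cover through $w$ exists, \emph{every} optimal committee contains $c_w$. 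Both work for the existential question asked here, and your approval-ballot Chamberlin--Courant reduction (which the paper defers) is complete and correct as stated. Your sketches for the ranking and Monroe cases identify the right obstacles (position-dependent Borda costs, the max-aggregation for $\ell_\infty$, and the balanced-assignment constraint), and your fix of adding quota-absorbing dummy voters for Monroe is exactly what the paper's Block 5 does. The one claim you should not lean on without proof is that \textsc{Vertex Cover Member} restricted to regular graphs remains $\Theta_2^P$-complete: this is not an off-the-shelf fact, and the paper avoids needing it precisely by handling the Monroe quotas with dummy voters rather than degree regularity. Likewise, the $\ell_\infty$ ``anti-patching'' tuning you flag is genuinely the delicate step and would need to be carried out explicitly for the argument to close.
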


Each of the statements above addresses six distinct scenarios. Due to lack of space, we give complete proofs for three of these settings, which we believe to be representative of the overall flavor of the arguments. For ease of presentation, all of the results presented here focus on the Chamberlin-Courant rule, and we will briefly explain the techniques involved in obtaining the analogous results for the Monroe rule. We also show that the \textsc{Candidate Winner} problem for Chamberlin-Courant can be solved in polynomial time on single-peaked voting profiles.

\newpage 
\section{Preliminaries}

For a positive integer $\el$, we denote the set $\{1, \ldots, \el\}$ by $[\el]$. We first define some general notions relating to voting rules. Let $V$ be a set of $n$ {\em voters} and $C$ be a set of $m$ {\em candidates}. If not mentioned otherwise, we denote the set of candidates, the set of voters, the number of candidates, and the number of voters by $C$, $V$, $m$, and $n$ respectively. Every voter $v$ has a {\em preference} $\suc_v$ which is typically a complete order over the set $C$ of candidates (rankings) or a subset of approved candidates (approval ballots). An instance of an election consists of the set of candidates $C$ and the preferences of the voters $V$, usually denoted as $E = (C,V)$.

We now recall some definitions in the context of rankings. We say voter $v$ prefers a candidate $x\in C$ over another candidate $y\in C$ if $x\suc_v y$. For a ranking $\succ$, pos$_\succ(c)$ is given by one plus the number of candidates ranked above $c$ in $\succ$. In particular, if there are $m$ candidates and $c$ is the top-ranked (respectively, bottom-ranked) candidate in the ranking $\succ$, then pos$_\succ(c)$ is one (respectively, $m$). We denote the set of all preferences over $C$ by $\LL(C)$. The $n$-tuple $(\suc_v)_{v\in V} \in\LL(C)^n$ of the preferences of all the voters is called a {\em profile}. Without loss of much rigor, we note that a profile, in general, is a multiset of linear orders. For a subset $M\subseteq V$, we call $(\suc_v)_{v\in M}$ a sub-profile of $(\suc_v)_{v\in V}$. For a subset of candidates $D \subseteq C$, we use $\PP\vert_D$ to denote the projection of the profile on the candidates in $D$ alone. The definitions of profiles, sub-profiles, and projections are analogous for approval ballots.




\paragraph*{Chamberlin-Courant for Rankings.} The Chamberlin--Courant voting rule is based on the notion of a {\em dissatisfaction} or a {\em misrepresentation} function. This function specifies, for each $i\in[m]$, a voter's dissatisfaction $\alpha^m(i)$ from being represented by the candidate she ranks in position~$i$. A popular dissatisfaction function is Borda, given by $\alpha^m(i) = i-1$.

We now turn to the notion of an assignment function. Let $k \leq m$ be a positive integer. A {\em $k$-CC-assignment function} for an election $E = (C,V)$ is a mapping $\Phi \colon V \rightarrow C$ such that $| \Phi(V) | = k$, where $\Phi(V)$ denotes the image of $\Phi$.  For a given assignment function $\Phi$, we say that voter $v\in V$ is \emph{represented} by candidate $\Phi(v)$ in the chosen committee. There are several ways to measure the quality of an
assignment function $\Phi$ with respect to a dissatisfaction function $\alpha:[m]\longrightarrow \mathbb{R}$; and we will use the following:
\begin{enumerate}
\item $\ell_1(\Phi,\alpha) = \sum_{v\in V} \alpha(\pos_{v}(\Phi(v)))$, and
\item $\ell_\infty(\Phi,\alpha) = \max_{v\in V}\alpha(\pos_{v}(\Phi(v)))$.
\end{enumerate}
Unless specified otherwise, $\alpha$ will be the Borda dissatisfaction function described above. We are now ready to define the Chamberlin-Courant voting rule.

\begin{definition}[Chamberlin-Courant] For $\ell\in\{\ell_1, \ell_\infty\}$, the {\em $\ell\hbox{-}$CC voting rule} is a mapping that takes an election $E = (C,V)$ and a positive integer $k$ with $k\le |C|$ as its input, and returns the images of all the $k$-CC-assignment functions $\Phi$ for $E$ that minimizes  $\ell(\Phi,\alpha)$. 
\end{definition}

\paragraph*{Chamberlin-Courant for Approval Ballots.} Recall that an approval vote of a voter $v$ on the set of candidates $C$ is some subset $S_{v}$ of $C$ such that $v$ approves all the candidates in $S_{v}$. We define the misrepresentation score of a $k$-sized committee $W$ as the number of voters which do not have any of their approved candidates in $W$ (i.e. $W \cap S_{v} = \phi$). Hence the optimal committees under approval Chamberlin-Courant are the committees which maximize the number of voters with at least one approved candidate in the winning committee~\cite{LS18}. 

We are now ready to describe the questions that we study in this paper. The first problem is \textsc{Chamberlin-Courant Winner Verification} (CCWV). Here, the input is an election $E = (C,V)$ and a subset $S$ of $k$ candidates. The question is if $S$ is a winning $k$-sized CC-committee for the election $E$, in other words, does $S$ achieve the best Chamberlin-Courant score in the given election among all committees of size $k$? 

In the second problem, given an election $E=(C,V)$, a committee size $k$, and a candidate $c \in C$, we ask if $c$ belongs to \emph{some} optimal $k$-sized committee, in other words, if there exists $S \subseteq C$ such that $c \in S, |S|=k$, and $S$ is a winning CC committee. We refer to this as the \textsc{Chamberlin-Courant Candidate Winner} problem (CCCW).


%

We now turn to the definition of the Monroe voting rule~\cite{M95}. Let ${\Phi^{-1}}(c)$ for $c \in C$ denote the set of voters \emph{represented} by $c$.

\begin{definition}[Monroe]  For $\ell\in\{\ell_1, \ell_\infty\}$, the {\em $\ell\hbox{-}$Monroe voting rule} is a mapping that takes an election $E = (C,V)$ and a positive integer $k$ with $k\le |C|$ as its input, and returns the image of any of the $k$-Monroe-assignment functions $\Phi$ such that $|\Phi^{-1}(c)|$ is either $\floor{\frac{n}{k}}$ or $\ceil{\frac{n}{k}}$ where $c \in C$ for $E$ that minimizes  $\ell(\Phi,\alpha)$.
\end{definition}

We note that \textsc{Monroe Winner Verification} (MWV) and \textsc{Monroe Candidate Winner} (MCW) are defined in the natural way. We also recall the definitions of \textsc{3-Hitting Set} and its \emph{complement}. In the \textsc{3-Hitting Set} problem, we are given a ground set $\mathcal{U}$, a family $\mathcal{F}$ of three-sized subsets of $\mathcal{U}$, and an integer $k$, and the question is if there exists $S \subseteq \mathcal{U}$ of size at most $k$ that intersects every set in $\mathcal{F}$, i.e: $\forall$ $F \in \mathcal{F}$, $S \cap F \neq \phi$. In the \textsc{c-3-Hitting Set} problem, the input is the same, and is a \textsc{Yes}-instance if and only if there is no hitting set of size $k$; in other words, if for each $S \subseteq \mathcal{U}$ with $|S| \leq k$, there exists some $F_S \in \mathcal{F}$ such that $S \cap F_S = \phi$. We recall that \textsc{3-Hitting Set} is a classic NP-complete problem, and c-\textsc{3-Hitting Set} is \emph{co-NP} complete.



\textbf{The Class $P^{NP}_{\parallel}$ ($\Theta_{2}^{P}$)} \label{subsec:intro_to_hard_class}
The class $P_{\parallel}^{NP}$ is the class of problems solvable using \Polytime{} machine having parallel access to an \NP{} oracle. The class $\Theta_{2}^{P}$ was introduced in \cite{PZ82} and named in \cite{wagner90}. The class $\Theta_{2}^{P}$ was shown to be equivalent to $P^{NP}_{\parallel}$ by Hemachandra \shortcite{H89}. The \textsc{Vertex Cover Member} problem is the following. Given a graph $G:=(V,E)$ and a vertex $w \in V$, the question is if there exists a minimum sized vertex cover containing $w$. The problem was shown to be complete for $P_{\parallel}^{NP}$ by \cite{HSV05}.

\newpage

\section{Winner Verification Problems}

In this section, we show the coNP-completeness of \textsc{Chamberlin-Courant Winner Verification} in the setting of rankings for the $\ell_1$-Borda misrepresentation score. The argument for membership is, in brief, the following: a rival committee with a better misrepresentation score is a valid certificate for the \NO{} instances of \CCWV{}. This is an efficiently computable certificate since it is easy to compute the Chamberlin-Courant score of a given committee. We now turn to the proof of hardness. 


\begin{theorem}
    \label{thm:CC-Winner-rankings-coNP}
    \textsc{Chamberlin-Courant Winner Verification} is coNP-hard in the setting of rankings for the $\ell_1$-Borda misrepresentation score.
\end{theorem}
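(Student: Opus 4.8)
The plan is to prove hardness by a polynomial-time many-one reduction from \textsc{c-3-Hitting Set}, which is \coNP{}-complete. From an instance $(\mathcal{U},\mathcal{F},k)$ I will build an election $E=(C,V)$ together with a $k$-committee $S$ so that $S$ is an optimal committee under $\ell_1$-Borda Chamberlin--Courant if and only if $\mathcal{F}$ has no hitting set of size $k$, i.e.\ if and only if $(\mathcal{U},\mathcal{F},k)$ is a \YES{}-instance of \textsc{c-3-Hitting Set}. The key idea is to forbid any ``universally good'' candidate (which would make partial progress always pay off under an $\ell_1$ objective) by using $k$ \emph{guard} candidates, one per copy of the whole set system, and to amplify so that giving up any guard is catastrophic unless one can genuinely hit everything.

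Here is the construction. Set $T:=k+1$ and $\Lambda:=3kT|\mathcal{F}|+1$. The candidate set is the disjoint union of: an \emph{element candidate} $x_u$ for each $u\in\mathcal{U}$; $k$ \emph{guard candidates} $z_1,\dots,z_k$; and, for each triple $(i,j,t)$ with $i\in[k]$, $F_j\in\mathcal{F}$, $t\in[T]$, a private block of $\Lambda$ \emph{filler candidates} $\phi^{i,j,t}_1,\dots,\phi^{i,j,t}_\Lambda$. For each such triple, with $F_j=\{u_a,u_b,u_c\}$, introduce one voter $v_{i,j,t}$ whose preference begins
\[
x_{u_a}\succ x_{u_b}\succ x_{u_c}\succ z_i\succ \phi^{i,j,t}_1\succ\cdots\succ\phi^{i,j,t}_\Lambda
\]
and then lists all remaining candidates (the other element candidates, the other guards, all fillers of other voters) in an arbitrary order. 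The committee to be verified is $S:=\{z_1,\dots,z_k\}$. Every voter ranks exactly one guard above its filler block, so the $\ell_1$-Borda score of $S$ equals $3n$, where $n=|V|=kT|\mathcal{F}|$; all quantities are polynomially bounded, so this is a polynomial-time reduction.

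Now the two directions. If $H\subseteq\mathcal{U}$ hits every set with $|H|\le k$, then $W:=\{x_u:u\in H\}$, padded arbitrarily to size $k$, contains one of the top three candidates of every voter, hence has score at most $2n<3n$, so $S$ is not optimal. Conversely, assume $\mathcal{F}$ has no hitting set of size $k$, and let $W$ be any $k$-committee; I claim $\mathrm{score}(W)\ge 3n$. If $W=S$ this is immediate, so $W$ omits some guard $z_{i_0}$. The set $\{u:x_u\in W\}$ has size at most $k$, hence is not a hitting set, so some $F_{j^\star}$ is disjoint from it. Among the $T=k+1$ voters $v_{i_0,j^\star,1},\dots,v_{i_0,j^\star,T}$, the committee $W$ contains a private filler of at most $k$ of them, so some voter $v^\star$ has no filler of its own in $W$; for $v^\star$, the committee contains none of its top three (element candidates of $F_{j^\star}$), not $z_{i_0}$, and none of its fillers, so $W$'s best representative for $v^\star$ sits below the filler block and contributes at least $\Lambda+4>3n$. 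Thus $\mathrm{score}(W)>3n=\mathrm{score}(S)$, and $S$ is optimal.

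The step I expect to be the real obstacle is the converse (equivalently, getting the gadget right): one must rule out ``mixed'' committees that retain most guards — paying the cheap cost $3$ on most voters — and spend the rest on element candidates. Three design choices are what make this impossible: (a) replicating the \emph{entire} set system once per guard, so dropping a guard forces a full hitting-set instance to be solved inside that copy; (b) taking $T=k+1$ identical copies of each voter, so a $k$-committee cannot ``rescue'' all voters of an unhit set through their fillers; and (c) inflating the filler blocks to length $\Lambda>3n$, so one un-rescued voter already overshoots the benchmark. Modulo routine degenerate cases ($k=0$ or $\mathcal{F}=\emptyset$), this finishes the proof; the same template, simplified to a single position-$4$ candidate with no replication (since improving one voter cannot lower a maximum), handles the $\ell_\infty$-Borda case, and replaying it over a set system with uniform element occurrences handles the Monroe rule.
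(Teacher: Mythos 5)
Your reduction is correct and follows essentially the same route as the paper's: both reduce from \textsc{c-3-Hitting Set}, place the three element candidates of each set in the top three positions, a position-four ``guard''/``dummy'' candidate indexed by $i\in[k]$, and a long filler block before everything else, and both compare the benchmark score $3n$ of the guard committee against the $\le 2n$ achievable by a hitting set. The one place you diverge is in ruling out committees that steal filler candidates: the paper uses a shared filler pool (distinct only within the top $3mk+4$ positions) and a counting lemma showing that if at most $|Z'|$ voters of the block $\{v(1,X)\}$ are badly represented, then the element candidates in the committee extend to a hitting set of size at most $k$, followed by a pigeonhole step to find one voter with no filler in the committee. You instead make the filler blocks private to each voter and replicate each voter $T=k+1$ times, so a $k$-committee trivially cannot rescue all copies of an unhit set; this costs a factor $k+1$ in instance size but replaces the paper's counting argument with an immediate pigeonhole, and is arguably cleaner. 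Both arguments are sound (and you correctly flag the degenerate cases $k=0$ and $\mathcal{F}=\emptyset$, which must be special-cased since an empty family makes the guard committee vacuously optimal while the hitting-set instance is a \textsc{No}-instance).
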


\begin{proof} We show a reduction from c-\textsc{3-Hitting Set} to the \textsc{CC-Winner} problem. Let $\langle  U,\FF; k \rangle$ be an instance of c-\textsc{3-Hitting Set} with $n$ elements in the universe $U$ and $m$ sets of size three in the family $\FF$. We construct a profile $\PP$ over alternatives $\AA$ as follows. First, we introduce one candidate corresponding to each element of the universe $U$, $k$ ``dummy'' candidates, and a large number of ``filler'' candidates, that is:
    \[ \AA := \underbrace{\{c_u ~|~ u \in U \}}_{\CC} ~\cup~ \underbrace{\{d_1, \ldots, d_k\}}_{\DD}  ~\cup~ \underbrace{\{z_1, \ldots, z_t\}}_{\ZZ}, \]
    where $t = 3(mk)^2$. Also, for every $1 \leq i \leq k$, and for every $X \in \FF$, introduce a vote $v(i,X)$ that places the candidates corresponding to the elements in $X$ in the top three positions, followed by $d_i$, followed by $3mk$ candidates from $Z$. We ensure that we use distinct candidates from $Z$ in the top $3mk + 4$ positions of all the voters, in other words, no candidate from $Z$ appears twice in the top $3mk + 4$ positions. Note that $t$ is chosen to be large enough to make this possible. This is followed by the candidates in $U\setminus X$ ranked in an arbitrary order followed by the remaining filler candidates, also ranked in an arbitrary order.
    
    In this instance, note that a committee corresponding to a hitting set has a score of at most $2mk$, while the score of the committee $\DD$ is $3mk$. In the constructed instance, we now ask if the committee $\mathcal{D}$ consisting of $k$ dummy candidates is a winning committee. This completes the construction of the instance. We now turn to the equivalence of two instances.
    
    In the forward direction, suppose we have a \YES{} instance of c-\textsc{3-Hitting Set}. This implies that there does not exist any hitting set of size at most $k$. Recall that misrepresentation score for committee consisting of a hitting set is at most $2mk$, while noting that any such committee must have size greater than $k$. Now, we show that for all other committees of size at most k, the misrepresentation score is greater than $3mk$.
    \begin{lemma}
    \label{lem:mis-rep-CC-rankings-l1}
    Consider an instance $\langle \AA,V,\DD \rangle$ of CC-winner Verification based on a \textsc{Yes}-instance of c-3-Hitting Set $\langle  U,\FF,k \rangle$. For any feasible committee $C' \subseteq \AA$ of size $k$ different from $\DD$, the $\ell_1$-Borda misrepresentation score of $C'$ is greater than $3mk$.
    \end{lemma}
    \begin{proof} \emph{(of Lemma~\ref{lem:mis-rep-CC-rankings-l1}.)}
    Let $U'$, $D'$ and $Z'$ denote, respectively, the candidate subsets $C' \cap \CC$, $C' \cap \DD$ and $C' \cap \ZZ$. Since $C'$ is different from $\DD$, there is at least one candidate from $\DD$ that does not belong to $C'$ (the only other possibility is that $C'$ is a superset of $\DD$, but this is not possible since $|C'| = |\DD| = k$). Without loss of generality, suppose $d_1 \notin C'$. Now consider the votes given by $V' := \{v(1,X) \mid X \in \FF\}.$ We claim that there are at least $|Z'| + 1$ voters in $V'$ whose misrepresentation score for the committee $C'$ is strictly greater than three. Indeed, if not, then it is straightforward to verify that $U'$ combined with an arbitrarily chosen element from each set not hit by $U'$ comprises a subset of size at most $|U'| + |Z'| \leq k$ which intersects every set in $\FF$, contradicting our assumption that $\FF$ has no hitting set of size at most $k$. To see this, observe that every vote in $V'$ that has a misrepresentation score of three or less is necessarily represented by a candidate from $U'$, since $d_1 \notin C'$, and therefore, the sets corresponding to all of these votes are hit by $U'$, and the remaining sets can be hit ``trivially'' since there are  at most $|Z|$ of them. Now consider the voters who have a ``high'' misrepresentation score:
    $V'' := \{v(1,X) \mid X \in \FF \mbox{ and } \tau(v(1,X),C') > 3\}.$
    By the argument in the previous paragraph, we have that $|V''| > |Z|$. Recalling that every vote has distinct filler candidates in the top $3mk$ positions after $d_i$, by the pigeon-hole principle, we conclude that there is at least one vote $v(1,X)$ in $V''$ such that $Z_X \cap Z = \emptyset$, where $Z_X$ denotes the filler candidates that appear in the top $3mk + 4$ positions of the vote $v(1,X)$. Since the candidates occupying the top four positions of this vote do not belong to $C'$ either, it follows that the misrepresentation score of $v(1,X)$ for $C'$ is greater than $3mk$, and this concludes our argument.
    \end{proof}
    

    The committee $\mathcal{D}$ has a misrepresentation score of $3mk$. Using Lemma \ref{lem:mis-rep-CC-rankings-l1}, since $\FF$ has no hitting set of size at most $k$, we have that $\mathcal{D}$ is a winning committee among all feasible committees, as desired.
    
    In the reverse direction, we start with the assumption that $\mathcal{D}$ is a winning committee. Therefore, the optimal misrepresentation for the constructed election instance is $3mk$. Observe that if there exists a hitting set $S$ of size at most $k$, then the committee $C'$ formed using the corresponding candidates of hitting set will have misrepresentation score of at most $2mk$, as discussed above. Thus, $\DD$ would not be a committee, a contradiction --- and this implies that $\langle  U,\FF; k \rangle$ was indeed a \YES{}-instance of c-\textsc{3-Hitting Set}. This completes the argument of equivalence.
\end{proof}

For $\ell_{\infty}$-CC, we again reduce from the complement of the \textsc{3-Hitting Set} problem with a similar construction. We introduce votes corresponding to sets in the family, where the top three candidates are the candidates corresponding to the elements contained in the set, and the fourth candidate is a dummy candidate. Once we construct $k$ blocks with distinct dummy candidates in the fourth position, the possible misrepresentations will play out in an analogous fashion. For obtaining these three results in the setting of Monroe, while the reduction is similar, we have to reduce from a variant of the Hitting Set problem with additional structure. 

%

\section{Candidate Winner Problems}

In this section, we turn to the \textsc{Monroe Candidate Winner} problem. Recall that the input is \MCWi{}, and the question is if there exists an optimal Monroe committee of size $k$ containing $c$. We demonstrate that the problem is complete for $\Theta_{2}^{P}$ in the setting of rankings for both the $\ell_1$ Borda misrepresentation function. The argument for the case of approval ballots for this problem are in a similar spirit, and are deferred to a full version of this paper. 

We first consider the case of the $\ell_1$ Borda misrepresentation. Our focus here will be on showing hardness, and we informally justify the claim for membership. We use oracle queries to a variant of MCW where we additionally demand for the committee to achieve a particular target misrepresentation score. Note that the worst possible misrepresentation score in an instance with $m$ candidates and $n$ voters is $mn$. Thus, by guessing this target score, we can find the score of the optimal Monroe committee that contains $c$ and the score of the optimal Monroe committee, and comparing these answers the question of whether there exists an optimal Monroe committee of size $k$ containing $c$. We now turn to the reduction to demonstrate hardness. 

\begin{theorem}
\label{thm:Monroe-Candidate-Winner-hardness-l1}
\MCWlong{} is $\Theta_{2}^{P}$-hard for the $\ell_{1}$ Borda misrepresentation function.
\end{theorem}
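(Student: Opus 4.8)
The plan is to reduce from the \textsc{Vertex Cover Member} problem, which is known to be $\Theta_2^P$-complete~\cite{HSV05}. Given a graph $G = (V,E)$ and a distinguished vertex $w \in V$, we must build an election $\langle C, V', c, k \rangle$ such that $c$ belongs to some optimal $k$-sized Monroe committee if and only if $G$ has a minimum vertex cover containing $w$. The high-level strategy is to make the candidate set contain one ``vertex candidate'' $a_v$ for each $v \in V$, with $c := a_w$, together with a pool of filler/dummy candidates that pad the profile so that the Monroe balance constraint ($|\Phi^{-1}(c)| \in \{\lfloor n/k\rfloor, \lceil n/k\rceil\}$) does not get in the way. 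I would create, for each edge $e = \{u,v\} \in E$, a block of voters whose top candidates are exactly $\{a_u, a_v\}$ (plus enough padding to fill out a balanced Monroe cell cheaply), so that ``covering'' an edge corresponds to including one of its two endpoints in the committee; a voter-block for an uncovered edge incurs a fixed large penalty. The committee size $k$ is set so that an optimal committee consists of (the candidates corresponding to) a minimum vertex cover of $G$, possibly completed with dummies, and the Monroe score cleanly encodes the size of the cover used.

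\textbf{Key steps, in order.} First, fix the parameters: let $\nu$ denote the (unknown) minimum vertex cover size; since \textsc{Vertex Cover Member} is the relevant source problem we do not know $\nu$ in advance, so the construction must be uniform in $G$ and $w$ only. I would make $k$ large enough (e.g.\ $k = |V|$, or $k$ chosen with a block of forced ``must-win'' dummy candidates) that any vertex subset of $G$ can be extended to a feasible committee by adding dummies that represent their own dedicated voter-blocks at zero (or uniform, hence irrelevant) misrepresentation cost. Second, design the edge-gadget voters so that: (a) if at least one endpoint of $e$ is in the committee, the voters of block $e$ can be assigned to that endpoint at a small fixed cost, and (b) if neither endpoint is in the committee, those voters pay a penalty so large that no optimal committee ever leaves an edge uncovered -- this forces the vertex-candidates chosen by any optimal committee to form a vertex cover. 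Third, calibrate the costs (using the $\ell_1$ Borda scores and the number of voters per block, analogous to the $2mk$ vs.\ $3mk$ separation in the proof of Theorem~\ref{thm:CC-Winner-rankings-coNP}) so that the optimal Monroe misrepresentation score is a strictly increasing affine function of the number of vertex-candidates in the committee; hence optimal committees correspond exactly to minimum vertex covers. Fourth, argue the Monroe balance constraint is satisfiable: each chosen candidate (vertex or dummy) must represent between $\lfloor n/k\rfloor$ and $\lceil n/k\rceil$ voters, so I would size each gadget block to exactly one Monroe cell and add ``free'' filler voters that can be shuffled among committee members to repair divisibility, all at uniform cost so they do not affect which committees are optimal. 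Finally, conclude: $c = a_w$ lies in an optimal Monroe committee iff some minimum vertex cover of $G$ contains $w$, which is precisely the \textsc{Vertex Cover Member} question; membership in $\Theta_2^P$ was argued above via the target-score oracle trick.

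\textbf{The main obstacle} I expect is the interaction between the Monroe balance constraint and the edge gadgets. Unlike Chamberlin--Courant, Monroe forces each committee member to represent (almost) exactly $n/k$ voters, so I cannot freely dump all of a block's voters onto a single chosen endpoint without accounting for how the remaining voters of that block get represented, and by whom, at what cost. The delicate part is to show that a minimum-vertex-cover committee admits a \emph{balanced} assignment of cost equal to the claimed optimum, while every non-cover or larger committee is strictly worse \emph{even after optimizing its balanced assignment}. I would handle this by making each edge block have size exactly $\lceil n/k \rceil$ (or a clean multiple thereof), so the natural assignment is automatically balanced, and by giving every dummy candidate its own private block of the right size; the filler voters then exist only to absorb rounding. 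A secondary subtlety is ensuring that ``cheating'' committees -- e.g.\ ones that include a filler or dummy candidate in place of a needed vertex candidate -- cannot do better; this is controlled by the large penalty for uncovered edges, exactly as the $3mk$ lower-bound lemma is used in Theorem~\ref{thm:CC-Winner-rankings-coNP}. I would also verify the reduction is polynomial-time: the number of added voters and candidates is polynomial in $|V| + |E|$, since each edge contributes $O(n/k)$ voters and we add $O(|V|)$ dummies with one block each.
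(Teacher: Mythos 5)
Your overall skeleton matches the paper's proof closely: a reduction from \textsc{Vertex Cover Member}, vertex candidates plus dummies, edge-voter gadgets with a large penalty for uncovered edges, a ``see-saw'' that makes the optimal committee contain exactly a minimum vertex cover padded out with dummies, and extra padding voters to satisfy the Monroe balance constraint (the paper's Block~5 plays exactly the role of your ``filler voters that absorb rounding''). However, there is one genuine gap that breaks the final equivalence as you have stated it.

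You conclude that $a_w$ lies in an optimal committee iff some minimum vertex cover contains $w$, and this rests on your calibration claim that the optimal Monroe score is ``a strictly increasing affine function of the \emph{number} of vertex-candidates in the committee,'' i.e.\ that the score depends only on how many cover vertices are chosen, not on which ones. With ranking-based $\ell_1$ Borda edge gadgets this cannot be arranged: in a vote $a_u \succ a_v \succ \cdots$ for edge $\{u,v\}$, serving the vote by $a_u$ costs $0$ while serving it by $a_v$ costs $1$, so two different minimum vertex covers generally incur different (by up to $m$) totals over the edge blocks. Mirroring the edge votes does not help, since covers that contain both endpoints of some edges still score differently from covers that do not. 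The consequence is that the optimal committees correspond only to those minimum vertex covers that also optimize this secondary, position-dependent objective; $w$ can belong to a minimum vertex cover and yet to no optimal committee, so your ``iff'' fails. Your construction is fully symmetric in the vertices and gives $a_w$ no special role, so there is no mechanism to repair this.

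The paper closes exactly this hole with a dedicated tie-breaking gadget: a block of $m+1$ identical votes headed by $c_w$ (and $c_w$ placed ahead of all other vertex candidates in a further block), which gives any committee containing $c_w$ a bonus strictly larger than the at-most-$m$ secondary variation among minimum-cover committees. Then, if some minimum vertex cover contains $w$, the best committee built from such a cover strictly beats every committee built from a cover avoiding $w$, while the structural lemma still forces every optimal committee's vertex part to be a minimum cover. You should add such a $w$-specific bonus block (sized to dominate the positional slack from the edge gadgets but to stay well below the per-slot see-saw cost) before the equivalence argument goes through. A secondary, more cosmetic point: your dummies must not merely come with ``zero or uniform, hence irrelevant'' private blocks --- their private blocks must be \emph{expensive to serve without them}, since it is precisely the cost of excluding a dummy that makes each additional vertex candidate costly and pins the number of vertex candidates at the minimum cover size; as written, ``uniform hence irrelevant'' would make any vertex cover (not just a minimum one) optimal.
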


\begin{proof}
We reduce from the $\Theta_{2}^{P}$-complete problem \textsc{Vertex Cover Member}. Recall that we are given a graph $G:=(V,E)$ (with $n$ vertices and $m$ edges), and a vertex $w \in V$, the question is if there exists a minimum sized vertex cover containing $w$. Given an instance $\langle G:=(V,E), w \rangle$ of \textsc{Vertex Cover Member} we construct an instance of \textsc{CC Candidate Winner} as follows. Let the set of candidates be $C := C_{v} \cup D \cup D' \cup S$, where $C_{v}$ denotes the set of $n$ candidates corresponding to vertices of $G$ and $D$ and $D'$ denote \emph{type I} and \emph{type II} dummy candidates respectively. Let $\Delta$ denote a set of $n^{4}m$ type I dummy candidates, and $\Delta'$ denote a set of $2(n^{4}m)$ type II dummy candidates. We note that the subsets of dummy candidates specified explicitly in different votes are always chosen so that there are no repeated dummy candidates in the explicitly defined blocks, in other words, the chosen dummy candidates are always distinct. Also, $D$ ($D')$ is the union of all the $\Delta$'s ($\Delta'$'s) respectively specified in the profile, which is given by the following five blocks of voters:
\begin{itemize}
    \item \textbf{Block 1}: We construct $m$ votes corresponding the edges in $G$. For an edge $(u,v)$ we add:
$$c_{u} \succ c_{v} \succ \Delta' \succ C_{v}\setminus\{c_{u},c_{v}\} \succ \mbox{rest}$$
where ``rest'' denotes the set of remaining candidates placed in an arbitrary order. 
    \item \textbf{Block 2:} For the desired vertex $w$ from the \textsc{Vertex Cover Member} instance, we pick an arbitrary edge incident on $w$ (say $(w,x)$) in $G$, and add $m+1$ copies of the following vote:
    \[ c_{w} \succ c_{x} \succ \Delta' \succ C_{v}\setminus\{c_{w},c_{x}\} \succ \mbox{rest}\]
    \item \textbf{Block 3:} We add $n$ votes of the form:
    \[ \Delta  \succ c_{w} \succ C_{v}\setminus c_{w} \succ \mbox{rest}\]
    \item \textbf{Block 4:} For each candidate $d_j \in D_s$, we add the vote:
    \[ d_j \succ \Delta' \succ \mbox{rest}. \]
    \item \textbf{Block 5:} Let $D_{\alpha}$ be a subset of dummy candidates $d \in D$ such that $d$ appears in the top position for one of the votes in Block 3. Note that $|D_{\alpha}|=n$. Further, let $\NN=2(m+n+1)$. For each $v \in \{C_v \cup D''\}$ and $\ell \in [\NN]$, we add the following vote:
    \[  v \succ D_{s} \succ \Delta' \succ \mbox{rest}\]
\end{itemize}

In the constructed \textsc{Monroe Candidate Winner} instance, we ask if there exists an optimal committee of size $2n+1$ containing $c_{w}$. This completes the construction for our reduction. Before showing the equivalence of the two instances, we establish the following lemma.

\begin{lemma}
\label{lem:l1-Monroe-candidate-winner-structure}
Let $q$ be the size of an optimal vertex cover in $G$. Then, following holds for any optimal committee $C'$ of size $2n+1$ in the constructed election instance:
\begin{enumerate}
    \item $C'$ does not contain any $d' \in D'$.
    \item $S \subset C'$.
    \item $C'$ contains exactly $q$ candidates corresponding to an optimal vertex cover.
\end{enumerate}
\end{lemma}

\begin{proof}
First, we analyze the Monroe score of a committee $\CC$ which contains all $n+1$ special candidates $S$, $q$ candidates corresponding to an optimal vertex cover $(S')$, and the remaining $(n-q)$ candidates from $D$ which appear in the top positions of $(n-q)$ votes in \emph{Block 3}. Note that in any Monroe committee $C'$, each candidate represents exactly $\NN$ votes. 
We now describe the Monroe assignment for $\CC$. Each vote in block 1 and 2 is represented by one of the top two candidates such that the corresponding vertex $(v)$ belongs to the vertex cover $(S')$. The misrepresentation for $\CC$ in Block 1 is at most $m$, and in Block 2 it is at most $m+1$. In Block 3, exactly $n-q$ votes are represented by their first choice. For those votes that are not represented by the top candidate already, the misrepresentation for $\CC$ is at most $(|\Delta|+n-1)$ per vote since all the votes in Block 3 are represented by the candidate in $C_{v}$ for that vote in the worst case. In block 4, all votes are represented by their top choice yielding zero misrepresentation. Votes in block 5 are represented as follows: For each candidate $c_{i}$ corresponding to a vertex $u \in S'$,  if $c_{u}$ represents $t$ votes from first 3 blocks, then $c_{u}$ also represents $(\NN-t)$ votes from block 5 among the ones she appears at the first position. Similarly, for $d \in \{D \cap \CC\}$, $d$ represents $\NN-1$ votes among the ones she appears at the top position. Next, each special candidate $s \in S$ represent $(\NN-1)$ votes in block 5, yielding misrepresentation score at most $(n+1)(\NN-1)$ for each $s$. Hence, the total misrepresentation for $\CC$ is strictly less than $n^4m +3mn^{2}+ 3n^{3} < 2(n^{3}m^{2}+n^{4}m)$ for large enough $n$.

Towards showing the first statement, consider a committee $C^{*}$ which contains $d' \in D'$. In any Monroe assignment, $d'$ has to represent $\NN$ votes. Observe that $d'$ appears in first $(n^{4}m)$ positions exactly once, hence, $\mbox{misrepresentation}(C^{*})>\mbox{misrepresentation}(\CC)$. To show the second statement, consider a committee $C^{*}$ which excludes a special candidate $s \in S$. It is easy to see that $\mbox{misrepresentation}(C^{*}) > \mbox{misrepresentation}(\CC)$ even if we only consider misrepresentation from a single vote from block 4 with $s$ at the first position.  

We now turn to statement 3. Now, let $C^{*}$ be an optimal committee which does not contain $q$ candidates corresponding to some optimal vertex cover. We use statements 1 and 2 to analyze following two cases:
\begin{itemize}
    \item \textbf{$|C^{*} \cap C_{v}| > q$:} In this case, $C^{*}$ contains at most $(n-q-1)$ candidates from $D$. Hence, the misrepresentation of $C'$ from Block 3 is at least $|\Delta| \times (q+1)$ which is greater than the misrepresentation score for $\CC$. This contradicts the optimality of $C'$.
    \item \textbf{$|C' \cap C_{v}| \leq q$:} Since the size of an optimal vertex cover is $q$, any committee $C'$ with at most $q$ candidates from $C_{v}$ does not include any candidates corresponding to the endpoint of at least one edge due to the case we are in (i.e. $C'$ does not contain candidates corresponding to optimal vertex cover). Hence, $C'$ incurs a misrepresentation of at least $|\Delta'|$ from one of the votes in Block 1 which implies $\mbox{misrepresentation}(C') >  \mbox{misrepresentation}(\CC)$.
\end{itemize}
This completes the proof for \cref{lem:l1-Monroe-candidate-winner-structure}.
\end{proof}

We now turn to the proof of equivalence. In the forward direction, given an optimal vertex cover of size $q$ containing $w$, we construct an optimal committee $C'$ by choosing $q$ candidates corresponding to the vertex cover, $(n-q)$ candidates from the set $D$ which appears in the top position of exactly $(n-q)$ votes from Block 3, all $n+1$ special candidates $S$. We compute the Monroe assignment of $C'$ is the same way we did for committee $\CC'$ in \cref{lem:l1-Monroe-candidate-winner-structure}. By \cref{lem:l1-Monroe-candidate-winner-structure}, we already know that any optimal committee must contain all candidates from $S$, and candidates corresponding to an optimal sized vertex cover. Therefore, it suffices to show that committees corresponding to optimal vertex covers not containing $c_w$ are not optimal. Indeed, this follows from the fact that in Block 2, $c_{w}$ is the top candidate in exactly $(m+1)$ votes, and in Block 3, $c_{w}$ leads all other candidates from the set $C_{v}$. Hence, it is easy to verify that an optimal committee must contain $c_{w}$. 

In the reverse direction, given an optimal committee $C'$ containing $c_{w}$, we need to construct an optimal vertex cover for $G$ which  includes the vertex $w$. Since $C'$ is optimal, using \cref{lem:l1-Monroe-candidate-winner-structure} we know $C' \cap C_{v}$ is an optimal vertex cover of $G$. Since we are given that $c_w \in C'$, we have that the vertex cover corresponding to $C'$ is an optimal vertex cover containing $c_w$, as desired.  
\end{proof}

Now, we show that CCCW is hard for $\Theta_{2}^{P}$ in the setting of rankings for the $\ell_\infty$ Borda misrepresentation function. We recall that an analogous result for the $\ell_\infty$ Borda misrepresentation function was shown in~\cite{BFKNST17}. The argument for membership is similar to the previous case and is omitted for brevity.

\begin{theorem}
    \label{thm:CC-Candidate-Winner-hardness-linfty}
    CC-Candidate Winner is $\Theta_{2}^{P}$-hard for the $\ell_{\infty}$ Borda misrepresentation function.
    \end{theorem}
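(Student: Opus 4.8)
The plan is to reduce from \textsc{Vertex Cover Member}, following the same high-level architecture as the $\ell_1$ proof of Theorem~\ref{thm:Monroe-Candidate-Winner-hardness-l1}, but adapting the gadgetry to exploit the fact that the $\ell_\infty$ misrepresentation is a \emph{max} over voters rather than a sum. Given an instance $\langle G=(V,E), w\rangle$ with $|V|=n$, $|E|=m$, I would introduce one candidate $c_v$ per vertex, a set $D$ of ``vertex-cover-slack'' dummy candidates, a set $D'$ of ``penalty'' dummies placed deep in certain votes, and a set $S$ of special/forced candidates whose job is to guarantee that any optimal committee has a fixed, predictable structure. The key design principle under $\ell_\infty$: the optimum score should be exactly some threshold $\tau$ (a Borda position like $\tau = m$ or a small polynomial in $n,m$), and a committee attains score $\le\tau$ \emph{iff} its vertex-candidates form a vertex cover of $G$ of the optimal size $q$ and it contains all of $S$. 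For the edge gadget I would, for each edge $(u,v)$, create votes placing $c_u, c_v$ in the top two positions and then pushing everything else (including all of $D'$) far down, so that \emph{not} selecting an endpoint of some edge forces that voter's representative to sit at position $\ge\tau+1$, blowing up the $\ell_\infty$ score. Padding votes (analogous to Blocks 3--5 above) would make $S$ mandatory and would calibrate the ``slack budget'': a committee of size $2n+1$ containing all of $S$ has only $n$ free slots, which must be split between vertex-candidates and $D$-dummies, and $D$-dummies are useful only if they appear near the top of the dedicated padding votes — forcing exactly $q$ vertex-candidates to be chosen, since fewer than $q$ cannot cover all edges within threshold $\tau$ and more than $q$ starves the padding votes.

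Concretely, the main steps in order: (1) build the candidate set and the five (or so) blocks of votes as above, fixing the numerical parameters (number of copies, depth at which $D'$ sits, size of $D$, $|S|=n+1$, committee size $2n+1$) so the arithmetic of the $\ell_\infty$ thresholds works out; (2) prove a structural lemma, the $\ell_\infty$ analogue of Lemma~\ref{lem:l1-Monroe-candidate-winner-structure}, stating that every committee of size $2n+1$ with $\ell_\infty$-score at most $\tau$ must (a) avoid $D'$, (b) contain all of $S$, and (c) have $C'\cap C_v$ equal to an optimal vertex cover of $G$ — each part argued by exhibiting a single offending voter whose representative is forced past position $\tau$; (3) verify the forward direction — from an optimal vertex cover of size $q$ containing $w$, build the committee $C' = S \cup \{c_v : v\in \text{cover}\} \cup (n-q\text{ well-placed }D\text{-dummies})$, exhibit an explicit assignment $\Phi$ with $\ell_\infty(\Phi,\alpha)\le\tau$, and note that $c_w$'s privileged placement (top of many votes in the $w$-block, and ahead of the other $c_v$'s in another block) forces $c_w$ into \emph{every} optimal committee, so the existence of an optimal committee containing $c_w$ is equivalent to the existence of an optimal vertex cover containing $w$; (4) the reverse direction then reads off, via the structural lemma, an optimal vertex cover from $C'\cap C_v$, containing $w$ because $c_w\in C'$.

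The main obstacle I anticipate is the tension inherent in $\ell_\infty$: since the score is a maximum, a \emph{single} badly-represented voter in any block collapses the whole committee to a bad score, which makes it delicate to ensure that the \emph{intended} optimal committee really does keep \emph{all} voters (including the many padding voters) at position $\le\tau$ simultaneously — the Monroe-style quota constraint of Theorem~\ref{thm:Monroe-Candidate-Winner-hardness-l1} is absent here, so I have genuine freedom in the CC assignment, but I must still check that each special candidate $s\in S$, each chosen $D$-dummy, and each chosen $c_v$ can be assigned a set of voters all of whom rank it within the first $\tau$ positions, and that these assignment ``footprints'' tile the voter set. Getting Block~5 (or its analogue) right — enough copies indexed by a parameter like $\NN$ so that the forced candidates $S$ have somewhere to ``park'' their represented voters at low Borda cost, while vertex-candidates and dummies soak up the rest — is the fiddly part; the depth $t$ of the filler region and the size of $D'$ must be chosen large enough (polynomial in $n,m$) that excluding a needed candidate unavoidably pushes some voter past $\tau$, yet the whole construction stays polynomial. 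Once the parameters are pinned down, the three equivalence arguments are routine and mirror the $\ell_1$ proof and the known $\ell_1$ result of~\cite{BFKNST17}.
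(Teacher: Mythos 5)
There is a genuine gap, and it sits exactly at the point your proposal treats as routine: showing that a committee which spends \emph{more} than $q$ slots on vertex candidates (where $q$ is the optimal cover size) is strictly worse. Under $\ell_1$ this is automatic — each dummy you are forced to drop adds a positive amount to the sum — but under $\ell_\infty$ it is not. If your padding votes are uniform (each with its own dedicated dummy on top and everything else deep), then leaving $q$ of them un-rescued and leaving $q+1$ of them un-rescued produce the \emph{same} maximum, so there is no penalty at all for an oversized cover; and if you try to cap the damage by placing the forced candidates $S$ near the top of the padding votes, the dummies become useless and the penalty vanishes again. Your phrase ``more than $q$ starves the padding votes'' together with a single fixed threshold $\tau$ (which you cannot set as a function of the unknown $q$ anyway) does not resolve this tension, and the obstacle you do anticipate — keeping the \emph{intended} committee below $\tau$ — is the easy direction.

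The paper's construction escapes this with a device your plan lacks: a \emph{staircase} of padding votes. Its Block~2 consists of $n$ votes $v_1,\dots,v_n$ in which the forced candidate $d'$ appears at strictly increasing depths ($v_i$ places $i$ fresh dummies $d_{i,1},\dots,d_{i,i}$ above $d'$), so un-rescued padding votes contribute \emph{different} amounts to the maximum. A committee with $n-q'$ free slots rescues the $n-q'$ deepest votes and is stuck with max roughly $q'+2$ from $v_{q'}$; hence the optimal $\ell_\infty$ score is a strictly increasing function of the number of vertex candidates used, which is what makes both directions of the equivalence go through. The paper also dispenses with your $(2n+1)$-sized committee and the $(n+1)$-element set $S$ — those are Monroe-specific artifacts (they exist to absorb the representation quota), and carrying them into the CC construction only adds voters whose representatives you would then have to keep below $\tau$ for no benefit. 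Without the graded padding block, the reduction you describe does not establish hardness.
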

    \begin{proof}
    As before, we reduce from \textsc{Vertex Cover Member}. Given an instance $\langle G:=(V,E), w \rangle$ of \textsc{Vertex Cover Member} we construct an instance of \textsc{CC Candidate Winner} as follows. Let the set of candidates be $C := C_{v} \cup D \cup D'$, where $C_{v}$ denotes the set of $n$ candidates corresponding to vertices of $G$, and $D$ and $D'$ denote \emph{type I} and \emph{type II} dummy candidates respectively. Let $\Delta$ denote a set of $m+n+1$ type I dummy candidates, and $\Delta'$ denote a set of $2$ type II dummy candidates. We note that the subsets of dummy candidates specified explicitly in different votes are always chosen so that there are no repeated dummy candidates in the explicitly defined blocks, in other words, the chosen dummy candidates are always distinct. We construct the set of voters as following three blocks:
    \begin{itemize}
        \item \textbf{Block 1}: We construct $(n+2)$ copies of each of $m$ votes corresponding the edges in $G$. For an edge $(u,v)$ we add $(n+2)$ copies of:
    \[ c_{u} \succ c_{v} \succ \Delta \succ C_{v}\setminus\{c_{u},c_{v}\} \succ \mbox{rest},\]
    where \emph{rest} denotes the set of remaining candidates in some arbitrary order. 
        \item \textbf{Block 2:} We add the following $n$ votes:
        \begin{align*}
            & v_{1} := \Delta' \succ d_{1,1} \succ d' \succ \Delta \succ C_{v} \succ  \mbox{rest}\\
            & \vdots\\
            & v_{i} := \Delta' \succ d_{i,1} \succ \ldots \succ d_{i,i} \succ d' \succ \Delta \succ C_{v} \succ \mbox{rest}\\
            & \vdots\\
            & v_{n} := \Delta' \succ d_{n,1} \succ \ldots \succ d_{n,n} \succ d' \succ \Delta \succ C_{v} \succ \mbox{rest}
        \end{align*}
         
        \item \textbf{Block 3:} We also add $(n+2)$ copies of the following vote to force $d'$ in any optimal committee:
        \[ d' \succ \Delta \succ \mbox{rest} \]
        
    \end{itemize}
    
    In the constructed \textsc{CC Candidate Winner} instance, we ask if there exists an optimal committee of size $n+1$ containing $c_{w}$. This completes the construction for our reduction. We now state a lemma analogous to \cref{lem:l1-Monroe-candidate-winner-structure}, whose proof is omitted due to space constraints.
    
    \begin{lemma}[$\star$]
    \label{lem:linfty-CC-candidate-winner-structure}
    Let $q$ be the size of optimal vertex cover in $G$ such that $q \geq 2$. Then, following holds for any committee $C'$ of size $n+1$ in the constructed election instance:
    \begin{enumerate}
        \item Any optimal committee contains candidate $d'$.
        \item If $C'$ contains $d \in D$, then $C'$ is not an \emph{optimal} committee.
        \item If $C'$ does not contain exactly $q$ candidates corresponding to an optimal vertex cover, then $C'$ is not optimal.
    \end{enumerate}
    \end{lemma}

    We now briefly sketch the proof of equivalence. In the forward direction, given an optimal vertex cover $V'$ of size $q$ containing $w$, we construct a committee $\CC$ as described in the first paragraph of the proof of \cref{lem:linfty-CC-candidate-winner-structure}. The fact that the proposed committee $\CC$ is indeed optimal is based on the observations about the structure of optimal committees in \cref{lem:linfty-CC-candidate-winner-structure} and the construction, and is easy to verify.
    In the reverse direction, let $C'$ be an optimal committee containing $c_{w}$. Using \cref{lem:linfty-CC-candidate-winner-structure}, we know that $C'$ contains candidates corresponding to optimal vertex cover. Hence, we can recover an optimal vertex cover containing $c_{w}$, concluding the argument.
\end{proof}
    
    

    We show analogous results for the CC voting rule in the context of approval ballots and the Monroe voting rule in all the remaining settings.
    
    

\section{Concluding Remarks}
We have addressed the problems of \textsc{Winner Verification} and \textsc{Candidate Winner} in the setting of multiwinner voting. We have resolved the complexity of both the problems for the Chamberlin-Courant and Monroe voting rules in various scenarios. In particular, for rankings, we have considered both the $\ell_1$ and the $\ell_\infty$ notions of misrepresentation, and we have also studied variants of these rules in the context of approval ballots. We have showed that verifying if a given committee is optimal is coNP-complete whilst the second problem is complete for $\Theta_{2}^{P}$ in all the twelve cases. Overall, our results comprehensively settle the complexity of these two problems in the general setting. These outcomes primarily serve the purpose of deepening our understanding of where these problems lie in the complexity-theoretic landscape. Further, since the \textsc{Winner Verification} family of problems are complete for $\Theta_2^P$, our results also hint that natural heuristics for the question are unlikely to perform well in practice. Indeed, investigating the performance of heuristics (by possibly adapting greedy approaches for finding optimal committees and forcing the choice of a desired candidate) would be an interesting direction for complementing our theoretic considerations. 

Another natural direction for further thought is the setting of restricted domains, which have received much attention for capturing structure in real-world data sets and for providing natural ``islands of tractability'' for several hard voting problems~\cite{elkind2017structured}. Indeed, although determining optimal committees for the Monroe rule remains intractable even in the setting of single-crossing profiles~\cite{skowron2013complexity}), we can find an optimal Chamberlin-Courant committee efficiently if the input is single-peaked~\cite{BSU13} or single-crossing~\cite{skowron2013complexity}. Further, the rule is tractable also for structured approval ballots~\cite{elkind2015structure}. With this background, it would be interesting to explore the complexity of the problems we study in the setting of restricted domains. The \textsc{Winner Verification} problems are tractable whenever the naturally associated \textsc{Winner Determination} problem is tractable, but the \textsc{Candidate Winner} problem is less immediate to resolve. In the single-peaked setting, with the $\ell_1$ Borda misrepresentation score, the \textsc{Candidate Winner} problem can be resolved by adding several dummy voters who place the desired candidate at the top position, and comparing the optimal CC scores of the original and modified instances. The situation for other restricted domains remains open. 

\newpage

\newpage

\bibliographystyle{named}
\bibliography{ijcai20}

\end{document}